\theoremstyle{plain}
\newtheorem{theorem}{Theorem}[section]
\newtheorem{lemma}[theorem]{Lemma}
\newtheorem{proposition}[theorem]{Proposition}
\newtheorem{corollary}[theorem]{Corollary}
\theoremstyle{definition}
\newtheorem{definition}[theorem]{Definition}
\newtheorem{remark}[theorem]{Remark}
\newtheorem{example}[theorem]{Example}
\numberwithin{equation}{section}
\numberwithin{theorem}{section}
\begin{document}

\title{Differential Calculi on Associative Algebras and Integrable Systems}

\author{ \sc{Aristophanes Dimakis}$^a$ and \sc{Folkert M\"uller-Hoissen}$^{b,c}$ \\
 \small
 $^a$ Dept. of Financial and Management Engineering,
 University of the Aegean, 82100 Chios, Greece \\  \small e-mail: dimakis@aegean.gr \\
  \small
 $^b$ Max Planck Institute for Dynamics and Self-Organization,
        37077 G\"ottingen, Germany \\
 \small       
 $^c$ Institute for Theoretical Physics, University of G\"ottingen, 
      37077 G\"ottingen, Germany \\
    \small   e-mail: folkert.mueller-hoissen@theorie.physik.uni-goettingen.de
}

\date{ }
\maketitle

\abstract{
After an introduction to some aspects of bidifferential calculus on associative algebras, we focus on the 
notion of a ``symmetry'' of a generalized zero curvature equation and derive B\"acklund and (forward, 
backward and binary) Darboux transformations from it. 
We also recall a matrix version of the binary Darboux transformation and, inspired by the so-called 
Cauchy matrix\index{Cauchy matrix} 
approach, present an infinite system of equations solved by it. Finally, we sketch recent work on a deformation 
of the matrix binary Darboux transformation in bidifferential calculus, leading to a treatment of integrable 
equations with sources. 
}


\section{Introduction}
\label{sec:intro_DMH}
Integrability\index{integrability} of a partial differential and/or difference equation (PDDE\index{PDDE}), or more generally 
a system of such equations, is most frequently understood in the sense that the equations 
arise as the compatibility (or integrability) condition of a set of linear equations, a 
Lax system\index{Lax system}. 
For some integrable PDEs in two dimensions, the compatibility condition can be expressed as 
the condition of vanishing curvature\index{curvature} of a connection, but this differential-geometric framework 
is too narrow. However, there is a generalization of the differential-geometric setting 
to a framework of ``noncommutative geometry''\index{geometry! noncommutative}, preserving simple computational rules. 
A crucial point is the use of ``generalized differential forms''\index{differential form! generalized} 
instead of ``generalized vector fields'' 
or generalized derivations\index{derivation! generalized} (see, e.g., \cite{March88_DMH,Bout+Marc94_DMH}) as the basic structure. 
Whereas, for example, discrete derivatives obey a modified derivation rule, on the level of forms 
one can preserve the simple graded derivation\index{derivation! graded} (Leibniz) rule for a 
generalized exterior derivative\index{exterior derivative! generalized}. 
Substantial results can then be derived by very simple and universal computations. 

Let $\mathcal{A}$ be an associative algebra\index{algebra! associative}, over a field $\mathbb{K}$ of characteristic zero, 
and $(\boldsymbol{\Omega}, \delta)$ a differential calculus\index{differential calculus} over $\mathcal{A}$. Here  
$\boldsymbol{\Omega} = \bigoplus_{k \geq 0} \boldsymbol{\Omega}^k$ is a graded algebra\index{algebra! graded} with 
$\boldsymbol{\Omega}^0 = \mathcal{A}$ and $\mathcal{A}$-bimodules 
$\boldsymbol{\Omega}^k$, and $\delta$ is a derivation of degree one\index{derivation! of degree one} satisfying $\delta^2 = 0$.  
A connection\index{connection} on a left $\mathcal{A}$-module\index{module} $\Gamma$ is a linear map $\nabla : \Gamma \to \boldsymbol{\Omega}^1 \otimes_\mathcal{A} \Gamma$,
such that $\nabla(f \gamma) = \delta f \otimes_\mathcal{A} \gamma + f \nabla \gamma$, for all $f \in \mathcal{A}$ 
and $\gamma \in \Gamma$. It extends to $\boldsymbol{\Omega} \otimes_\mathcal{A} \Gamma$, so that we can define the 
curvature\index{curvature} as $\nabla^2$. The essence of integrability\index{integrability} of some ``condition'' 
may then be expressed as
\begin{eqnarray*}
      \nabla^2=0 \quad \Longleftrightarrow \quad \mbox{condition}  \, .
\end{eqnarray*}
For example, here ``condition'' could mean that a variable solves a certain PDDE\index{PDDE}. In this case we are looking for 
a differential calculus\index{differential calculus} and a connection\index{connection} on a module\index{module}, depending 
on the dependent variable of the PDDE, such that the zero curvature\index{curvature} condition holds iff 
the variable satisfies the PDDE. 

We will make the simplifying assumption that the module\index{module} $\Gamma$ has a basis $b^\mu$, $\mu=1,\ldots,m$. Then, using 
the summation convention, we have $\gamma = \gamma_\mu \, b^\mu$ and
\begin{eqnarray*}
  \nabla \gamma &=& (\delta \gamma_\mu + \gamma_\nu \, \boldsymbol{A}^\nu{}_\mu) \otimes_\mathcal{A} b^\nu \, , \\
  \nabla^2 \gamma &=& \gamma_\mu \, F_\delta[\boldsymbol{A}]^\mu{}_\nu \otimes_\mathcal{A} b^\nu \, , \quad 
  F_\delta[\boldsymbol{A}]^\mu{}_\nu = \delta \boldsymbol{A}^\mu{}_\nu - \boldsymbol{A}^\mu{}_\kappa \, \boldsymbol{A}^\kappa{}_\nu \, ,
\end{eqnarray*}
where $\boldsymbol{A}$ is the $m \times m$ matrix of elements of $\boldsymbol{\Omega}^1$ defined by 
$\nabla b^\mu = \boldsymbol{A}^\mu{}_\nu \otimes_\mathcal{A} b^\nu$.  

However, in most cases the above characterization of integrability\index{integrability} is not strong enough. Rather, one 
needs a connection\index{connection} that depends on a (``spectral'') parameter and the stronger condition 
that the curvature\index{curvature} vanishes for all of its values. 
Bidifferential calculus\index{bidifferential calculus} \cite{DMH00a_DMH,DMH08bidiff_DMH} is the special case 
where $\nabla$ is linear in such a parameter. 
In particular, one meets this situation in case of the (anti-) self-dual Yang-Mills 
equation\index{Yang-Mills, self-dual} \cite{Maso+Wood96_DMH,Duna09_DMH}.
So let 
\begin{eqnarray*}
    \delta = \bar{\mathrm{d}} + \lambda \, \mathrm{d} \, , \qquad
    \boldsymbol{A} = A + \lambda \, B \, .
\end{eqnarray*}
The zero curvature\index{curvature} condition is required to hold for all $\lambda$ and is then equivalent to
\begin{eqnarray*}
    F_{\bar{\mathrm{d}}}[A] = 0 = F_{\mathrm{d}}[B] \, , \qquad 
    \mathrm{d} A + \bar{\mathrm{d}} B + A \, B + B \, A = 0 \, .
\end{eqnarray*}
Since typically one of the two ``gauge potentials'' $A$ and $B$ can be transformed to zero by a gauge transformation, 
we set $B=0$.\footnote{Setting alternatively $A=0$ and keeping $B$, corresponds to an exchange of 
$\mathrm{d}$ and $\bar{\mathrm{d}}$ in the following equations.}
Then the zero curvature\index{curvature} condition reduces to
\begin{eqnarray}
   \mathrm{d} A = 0  \, , \qquad F_{\bar{\mathrm{d}}}[A] = 0 \, .    \label{zero_curv_reduced_DMH}
\end{eqnarray}
The first equation can be solved by setting $A = \mathrm{d} \phi$, with $\phi \in \mathrm{Mat}(m,m,\mathcal{A})$, 
the algebra of $m \times m$ matrices over $\mathcal{A}$. Then the last equation takes the form
\begin{eqnarray}
     \mathrm{d} \, \bar{\mathrm{d}} \, \phi + \mathrm{d}\phi \; \mathrm{d}\phi = 0 \, .    \label{phi_eq_DMH}
\end{eqnarray}
Alternatively, we can solve $F_{\bar{\mathrm{d}}}[A] = 0$ by setting $A = (\bar{\mathrm{d}} g) \, g^{-1}$ with an invertible $g \in \mathrm{Mat}(m,m,\mathcal{A})$. 
Then $\mathrm{d} A = 0$ results in 
\begin{eqnarray}
        \mathrm{d} \, [ (\bar{\mathrm{d}} g) \, g^{-1} ] = 0  \, .   \label{g_eq_DMH}
\end{eqnarray}
Equations (\ref{phi_eq_DMH}) and (\ref{g_eq_DMH}) are related by 
\begin{eqnarray}
     \bar{\mathrm{d}} g  = (\mathrm{d} \phi) \, g \, ,  \label{Miura_DMH}
\end{eqnarray} 
which has both, (\ref{phi_eq_DMH}) and (\ref{g_eq_DMH}), as integrability\index{integrability} conditions. 

\begin{remark}
Another way to reduce the zero curvature condition to a single equation is by setting 
$A = ( \bar{\mathrm{d}} g - (\mathrm{d}g) \, \Delta ) \, g^{-1}$ with some $\Delta \in \mathcal{A}$, 
satisfying $\bar{\mathrm{d}} \Delta = (\mathrm{d} \Delta) \, \Delta$. Then (\ref{g_eq_DMH}) 
is generalized to $\mathrm{d} \, [ (\bar{\mathrm{d}} g - (\mathrm{d} g) \, \Delta ) \, g^{-1} ] = 0$.
\end{remark}

The problem of finding a Lax system\index{Lax system} for a PDDE\index{PDDE}, or a system of PDDEs, 
is now replaced by the problem of finding a 
bidifferential calculus\index{bidifferential calculus} representation, such that the PDDE (system) is equivalent 
to either (\ref{phi_eq_DMH}) or (\ref{g_eq_DMH}). These two equations can be regarded (cf. \cite{DMH08bidiff_DMH}) 
as generalizations or analogs of well-known potential forms of the famous (anti-) self-dual Yang-Mills 
equation\index{Yang-Mills, self-dual} on a flat four-dimensional space with Euclidean or split signature 
\cite{Maso+Wood96_DMH,Duna09_DMH}. 
We should point out that there may be PDDEs possessing a Lax system, but no bidifferential 
calculus\index{bidifferential calculus} representation. At least quite a number of integrable 
PDDEs\index{PDDE! integrable} are indeed realizations of one of the above equations. 

How to understand that an equation like (\ref{phi_eq_DMH}) has so many integrable equations as realizations? Of course, in order 
to make contact with an integrable PDDE\index{PDDE! integrable}, the algebra $\mathcal{A}$ has to include an algebra 
of functions on some set. But we can extend it by operators, so that $\phi$ becomes operator-valued. If, 
nevertheless, (\ref{phi_eq_DMH}) turns out to be equivalent to a PDDE, or a system of PDDEs\index{PDDE}, we are in business. 

In the bidifferential calculus\index{bidifferential calculus} framework, we thus consider ``universal'' equations 
for objects in any associative algebra\index{algebra! associative}. Since there are many concrete choices of a 
bidifferential calculus, there are lots of different realizations, 
which inherit essential properties from the universal framework. (\ref{phi_eq_DMH}) and (\ref{g_eq_DMH}) are important 
examples, but we will also show that they have interesting extensions. It is important to stress that 
bidifferential calculus\index{bidifferential calculus} is a \emph{framework}. It is not bound to the equations introduced above.  

Although examples presented in this work are integrable PDDEs\index{PDDE! integrable}, 
we want to emphasize the general structure, which only requires a graded associative algebra\index{algebra! graded} and 
two anticommuting graded derivations of degree one\index{derivation! of degree one} acting on it with coboundary properties. 
There might well be applications to quite different areas of mathematics. 

In Section~\ref{sec:bidiff_DMH} we define in a more precise way what we mean by a bi\-differential calculus. Some 
important examples are recalled. Furthermore, it is shown that (\ref{zero_curv_reduced_DMH}) is the 
integrability\index{integrability} condition of a linear system\index{linear system}. In this sense, (\ref{zero_curv_reduced_DMH}) is integrable. 
Moreover, there is a simple construction of an infinite chain of conservation laws\index{conservation law}.  

Section~\ref{sec:sym_DMH} introduces a class of symmetries of (\ref{zero_curv_reduced_DMH}). We show that they 
lead to a universal B\"acklund transformation\index{B\"acklund transformation} \cite{Roge+Shad82_DMH,Roge+Schi02_DMH} of (\ref{phi_eq_DMH}), respectively (\ref{g_eq_DMH}), 
as well as to (forward, backward and binary) Darboux transformations\index{Darboux transformation} (see \cite{Matv+Sall91_DMH} for concrete examples).  

Perhaps the most efficient method to generate large classes of exact solutions of integrable systems is a 
matrix version of the \emph{binary Darboux transformation} method. An abstract version in bidifferential calculus 
is recalled in Section~\ref{sec:bidiff_DMH}. Section~\ref{sec:recurrence_DMH} presents new material, an infinite system of equations 
solved by the latter method. 

In Section~\ref{sec:scs_DMH}, we sketch recent work about a deformation of the binary Darboux transformation 
in bidifferential calculus, leading to integrable equations with sources \cite{CDMH16_DMH,MHCT17_DMH}. 

Finally, Section~\ref{sec:conclusions_DMH} contains some concluding remarks.

\section{Bidifferential calculus}
\label{sec:bidiff_DMH}
First we give a precise definition of bidifferential calculus\index{bidifferential calculus}.

\begin{definition}
A \emph{graded associative algebra}\index{algebra! graded} is an associative algebra\index{algebra! associative} 
$\boldsymbol{\Omega} = \bigoplus_{r \geq 0} \boldsymbol{\Omega}^r$
over a field $\mathbb{K}$ of characteristic zero, where $\mathcal{A} := \boldsymbol{\Omega}^0$ is an associative 
algebra\index{algebra! associative} over $\mathbb{K}$ and $\boldsymbol{\Omega}^r$, $r \geq 1$,
are $\mathcal{A}$-bimodules\index{bimodule} such that $\boldsymbol{\Omega}^r \, \boldsymbol{\Omega}^s \subseteq \boldsymbol{\Omega}^{r+s}$.
A \emph{bidifferential calculus} is a unital graded associative algebra $\boldsymbol{\Omega}$, supplied
with two $\mathbb{K}$-linear graded derivations\index{derivation! graded} 
$\mathrm{d}, \bar{\mathrm{d}} : \boldsymbol{\Omega} \rightarrow \boldsymbol{\Omega}$
of degree one (hence $\mathrm{d} \boldsymbol{\Omega}^r \subseteq \boldsymbol{\Omega}^{r+1}$,
$\bar{\mathrm{d}} \boldsymbol{\Omega}^r \subseteq \boldsymbol{\Omega}^{r+1}$), and such that
\begin{eqnarray}
    \mathrm{d}^2 = \bar{\mathrm{d}}^2 = \mathrm{d} \bar{\mathrm{d}} + \bar{\mathrm{d}} \mathrm{d} = 0  \, .   
    \label{bidiff_conds_DMH}
\end{eqnarray}
\end{definition}

The next subsection explains the integrability\index{integrability} of the equations (\ref{phi_eq_DMH}) 
and (\ref{g_eq_DMH}), respectively their predecessor (\ref{zero_curv_reduced_DMH}), in 
bidifferential calculus\index{bidifferential calculus}. 
Another subsection recalls a construction of an infinite chain of conservation laws\index{conservation law}.  
The third subsection is devoted to integrable PDDEs\index{PDDE! integrable} as realizations of the equations (\ref{phi_eq_DMH}) or 
(\ref{g_eq_DMH}) in bidifferential calculus. 
Further subsections deal with bidifferential calculi associated with compatible Lie algebra structures, 
and with a topic in differential geometry\index{geometry! differential} known as Fr\"olicher-Nijenhuis 
theory\index{Fr\"olicher-Nijenhuis theory}.

\subsection{Linear systems and integrability}
Let $\Delta \in \mathrm{Mat}(n,n,\mathcal{A})$ and $\lambda$ be an $n \times n$ matrix of elements of $\boldsymbol{\Omega}^1$, 
subject to the following equations,
\begin{eqnarray}
      \bar{\mathrm{d}} \Delta + [\lambda , \Delta] = (\mathrm{d} \Delta) \, \Delta \, , \qquad  
      \bar{\mathrm{d}} \lambda + \lambda^2 = (\mathrm{d} \lambda) \, \Delta \, .   \label{Delta,lambda_eqs_DMH}
\end{eqnarray}
The linear system\index{linear system} 
\begin{eqnarray}
    \bar{\mathrm{d}} \theta = A \, \theta + (\mathrm{d} \theta) \, \Delta + \theta \, \lambda \, ,  \label{theta_eq_A_DMH}         
\end{eqnarray}
for $\theta \in \mathrm{Mat}(m,n,\mathcal{A})$, then has the integrability\index{integrability} condition
\begin{eqnarray*}
    F_{\bar{\mathrm{d}}}[A] \, \theta - (\mathrm{d} A) \, \theta \, \Delta = 0 \, .
\end{eqnarray*}
If $\Delta$ is such that this implies separate vanishing of both summands, (\ref{zero_curv_reduced_DMH}) turns out 
to be a consistency condition of (\ref{theta_eq_A_DMH}). (\ref{zero_curv_reduced_DMH}) is \emph{integrable} in the sense that it arises 
as the integrability\index{integrability} condition of a linear system\index{linear system}. 
The equations (\ref{Delta,lambda_eqs_DMH}) are in particular solved by $\lambda =0$ and a $\mathrm{d}$- and 
$\bar{\mathrm{d}}$-constant $\Delta$. But in some cases other solutions of (\ref{Delta,lambda_eqs_DMH}) are more important 
(see, e.g., \cite{DKMH11SIGMA_DMH,DMH13SIGMA_DMH}). 

In the same way one finds that (\ref{zero_curv_reduced_DMH}) is also the integrability\index{integrability} condition 
of the ``adjoint'' linear system\index{linear system}
\begin{eqnarray}
    \bar{\mathrm{d}} \eta = - \eta \, A + \Gamma \, \mathrm{d} \eta + \kappa \, \eta \, ,   \label{eta_eq_A_DMH}
\end{eqnarray}
where $\eta \in \mathrm{Mat}(n,m,\mathcal{A})$, $\Gamma \in \mathrm{Mat}(n,n,\mathcal{A})$, and $\kappa$ is an $n \times n$ matrix of
elements of $\boldsymbol{\Omega}^1$. They have to satisfy 
\begin{eqnarray}
      \bar{\mathrm{d}} \Gamma - [\kappa , \Gamma] = \Gamma \, \mathrm{d} \Gamma  \, , \qquad  
      \bar{\mathrm{d}} \kappa - \kappa^2 = \Gamma \, \mathrm{d} \kappa \, .   \label{Gamma,kappa_eqs_DMH}
\end{eqnarray}
(\ref{bidiff_conds_DMH}) and the graded derivation\index{derivation! graded} property of $\mathrm{d}$ and 
$\bar{\mathrm{d}}$ are sufficient to obtain these results.

\subsection{A recursive construction of conservation laws}
\label{subsec:conserv_DMH}
Let $(\boldsymbol{\Omega},\mathrm{d}, \bar{\mathrm{d}})$ be a bidifferential calculus\index{bidifferential calculus}. 
We call an element $J$ of $\boldsymbol{\Omega}^1$ a \emph{bi-closed 1-form}\index{bi-closed 1-form}\footnote{This 
generalizes the notion of a \emph{fundamental 1-form}\index{fundamental 1-form} introduced in \cite{Magr+Moro84_DMH} 
to bidifferential calculus. In the spirit of \cite{Grif+Mehd97_DMH}, it may also be called a ``conservation law''. 
Also see Section~\ref{subsec:FN_theory_DMH}.}
if 
\begin{eqnarray*}
      \mathrm{d} J = 0  \quad \mbox{and} \quad \bar{\mathrm{d}} J = 0 \, .
\end{eqnarray*}

Now let $J_1$ be a given bi-closed 1-form. If $\mathrm{d} J_1 = 0$ implies the existence of an element 
$\chi_1 \in \mathcal{A}$ such that
\begin{eqnarray*}
   J_1 = \mathrm{d} \chi_1 \, ,
\end{eqnarray*}
then it follows, by use of (\ref{bidiff_conds_DMH}), that 
\begin{eqnarray*}
    J_2 = \bar{\mathrm{d}} \chi_1 
\end{eqnarray*}
is also a bi-closed 1-form. This leads to a sequence of bi-closed 1-forms,
\begin{eqnarray*}
    J_{k+1} = \mathrm{d} \chi_{k+1} = \bar{\mathrm{d}} \chi_k \, , \qquad k=1,2,\ldots \, ,
\end{eqnarray*}
provided the closed $\mathrm{d}$-forms appearing in the construction are $\mathrm{d}$-exact 
(which is in particular true if the $\mathrm{d}$-cohomology is trivial).
Hence, if there is one bi-closed 1-form, there is typically an infinite number of bi-closed 1-forms.  

How to find a bi-closed 1-form to start with?
If $f \in \mathcal{A}$ satisfies $\bar{\mathrm{d}} \mathrm{d} f = 0$, then $J = \bar{\mathrm{d}} f$ is a bi-closed 1-form of 
the bidifferential calculus\index{bidifferential calculus}, since $\bar{\mathrm{d}} J = \bar{\mathrm{d}}^2 f = 0$ and  
$\mathrm{d} J = \mathrm{d} \bar{\mathrm{d}} f = - \bar{\mathrm{d}} \mathrm{d} f =0$. In particular, any non-zero 
$f \in \mathcal{A}$ satisfying $\mathrm{d} f = 0$ or $\bar{\mathrm{d}} f =0$ determines a bi-closed 1-form, which can 
then be used to apply the above construction of a chain of bi-closed 1-forms\index{bi-closed 1-form}.

\begin{remark}
If $J_1 = \bar{\mathrm{d}} \chi_0$ and $\bar{\mathrm{d}} \mathrm{d} \chi_0 = 0$, the formal power series
\begin{eqnarray*}
    \chi = \sum_{k=0}^\infty \chi_k \, \mu^k \, ,
\end{eqnarray*}
with a parameter or an indeterminate $\mu$, satisfies the linear system\index{linear system}
\begin{eqnarray*}
    \mathrm{d} \chi = \mu \, \bar{\mathrm{d}} \chi + \mathrm{d} \chi_0 \, .
\end{eqnarray*}
\end{remark}
\vspace{.2cm}

The above construction becomes really useful by observing that the zero curvature condition\index{curvature} 
(\ref{zero_curv_reduced_DMH}) is equivalent to the \emph{bicomplex conditions}
\begin{eqnarray*}
     \mathrm{d}^2 = 0 \, ,  \qquad \bar{D}^2 = 0 \, , \qquad \mathrm{d} \bar{D} + \bar{D} \mathrm{d} = 0 \, ,
\end{eqnarray*}
where we promoted $\mathrm{d}$ to the ``generalized covariant exterior derivative''\index{exterior derivative! generalized covariant} 
\begin{eqnarray*}
     \bar{D} = \bar{\mathrm{d}} - A \, . 
\end{eqnarray*}
Now $\mathrm{d}$ and $\bar{D}$ are acting on $\mathrm{Mat}(m,m,\Omega)$, regarded as a left $\mathcal{A}$-module\index{module}, and
$\chi_k \in \mathrm{Mat}(m,m,\mathcal{A})$.  
If the bicomplex conditions are equivalent to some PDDE\index{PDDE}, for a particular choice of the 
bidifferential calculus\index{bidifferential calculus}, 
then the above construction, with $\bar{\mathrm{d}}$ replaced by $\bar{D}$, yields an 
infinite chain of conservation laws\index{conservation law} for this PDDE. This has been elaborated for many integrable 
equations \cite{DMH00a_DMH}, also see \cite{DMH00c_DMH,DMH00e_DMH,DMH00ncKdV_DMH}. As a special case, it 
includes the construction of non-local conservation laws\index{conservation law} for principal chiral models\index{chiral model} 
in \cite{BIZZ79_DMH}, which has been formulated in terms of differential forms\index{differential form}  and generalized in 
\cite{DMH96int_DMH,DMH97int_DMH}.

\subsubsection{A modification}
\label{subsec:Magri_DMH}
Let $A = \mathrm{d} \phi$ and $\phi \in \mathrm{Mat}(m,m,\mathcal{A})$ be a solution of (\ref{phi_eq_DMH}). 
Then $\mathrm{d}$ and $\bar{D}$ satisfy the bicomplex conditions. 

We first note that (\ref{phi_eq_DMH}) can be expressed as
\begin{eqnarray*}
    \mathrm{d} (\bar{\mathrm{d}} \phi + \phi \, \mathrm{d} \phi ) = 0 \, .
\end{eqnarray*}
If this implies that the 1-form in the brackets is exact, there is $a_1 \in \mathrm{Mat}(m,m,\mathcal{A})$ such that
\begin{eqnarray}
     \mathrm{d} a_1 = \bar{\mathrm{d}} \phi + \phi \, \mathrm{d} \phi \, .   \label{da_1_DMH}
\end{eqnarray}
Acting with $\bar{\mathrm{d}}$ on this equation, using the latter 
to eliminate $\bar{\mathrm{d}} \phi$ and (\ref{phi_eq_DMH}) to replace the term $\bar{\mathrm{d}} \mathrm{d} \phi$, 
leads to
\begin{eqnarray*}
   0 = \bar{\mathrm{d}} \mathrm{d} a_1 - \mathrm{d} a_1 \, \mathrm{d}\phi 
     = - \mathrm{d}( \bar{\mathrm{d}} a_1 + a_1 \, \mathrm{d} \phi ) \, .
\end{eqnarray*}
We are thus led to
\begin{eqnarray*}
    \mathrm{d} a_2 = \bar{\mathrm{d}} a_1 + a_1 \, \mathrm{d} \phi \, .
\end{eqnarray*}
Continuing in this way and setting $a_0 = \phi$, we obtain the recurrence relation
\begin{eqnarray}
     \mathrm{d} a_{k+1} = \bar{\mathrm{d}} a_k + a_k \, \mathrm{d} a_0 \qquad  k=0,1,\ldots \, .  \label{a_recur_DMH}
\end{eqnarray}
Furthermore, we have
\begin{eqnarray*}
    \bar{\mathrm{d}} \mathrm{d} a_k = \mathrm{d} a_k \, \mathrm{d} \phi  \qquad k=0,1,\ldots \, .
\end{eqnarray*}

Let now $\tilde{\chi}_0 \in \mathrm{Mat}(m,m,\mathcal{A})$ be a solution of 
\begin{eqnarray*}
   0 = \mathrm{d} \bar{D} \tilde{\chi}_0 = \mathrm{d} ( \bar{\mathrm{d}} \tilde{\chi}_0 - (\mathrm{d} \phi) \, \tilde{\chi}_0 ) 
     = \mathrm{d} ( \bar{\mathrm{d}} \tilde{\chi}_0 + \phi \, \mathrm{d} \tilde{\chi}_0 ) \, .
\end{eqnarray*}
This means that\footnote{Here we depart from the previous scheme since $\tilde{J}_1 = J_1 + \mathrm{d} ( \phi \, \tilde{\chi}_0)$, 
where $J_1 = \bar{D} \tilde{\chi}_0$. $\tilde{J}_1$ is not annihilated by $\bar{D}$ if $\phi \neq 0$.} 
\begin{eqnarray*}
   \tilde{J}_1 = \bar{\mathrm{d}} \tilde{\chi}_0 + \phi \, \mathrm{d} \tilde{\chi}_0 
\end{eqnarray*}
is $\mathrm{d}$-closed. Assuming that it is $\mathrm{d}$-exact, there is 
$\tilde{\chi}_1 \in \mathrm{Mat}(m,m,\mathcal{A})$ such that
\begin{eqnarray*}
   \tilde{J}_1 = \mathrm{d} \tilde{\chi}_1 \, ,
\end{eqnarray*}
and it follows, by use of (\ref{da_1_DMH}), that
\begin{eqnarray*}
   \tilde{J}_2 = \bar{\mathrm{d}} \tilde{\chi}_1 + a_1 \, \mathrm{d} \tilde{\chi}_0  
\end{eqnarray*}
is $\mathrm{d}$-closed. Assuming again $\mathrm{d}$-exactness, we have 
\begin{eqnarray*}
 \tilde{J}_2 = \mathrm{d} \tilde{\chi}_2 \, . 
\end{eqnarray*}

Continuing in this way, by using (\ref{a_recur_DMH}) we obtain a chain of $\mathrm{d}$-closed 
(and then $\mathrm{d}$-exact) 1-forms
\begin{eqnarray*}
   \tilde{J}_{k+1} := \bar{\mathrm{d}} \tilde{\chi}_k + a_k \, \mathrm{d} \tilde{\chi}_0  
   = \mathrm{d} \tilde{\chi}_{k+1}      \qquad k=0,1,\ldots \, .
\end{eqnarray*}
If $a_k=0$, $k=1,2,\ldots$, this reduces to the previous construction in Section~\ref{subsec:conserv_DMH}. 
In the particular case of Fr\"olicher-Nijenhuis theory\index{Fr\"olicher-Nijenhuis theory}, 
also see Section~\ref{subsec:FN_theory_DMH}, 
this scheme reduces to that in \cite{Magr03_DMH} (also see \cite{Lore+Magri05_DMH}), where it is shown 
to appear even in the simple example of geodesic flow on $\mathbb{R}^n$. 

According to the main scheme in Section~\ref{subsec:conserv_DMH} we can also construct a chain of 
elements $\chi_k \in \mathrm{Mat}(m,m,\mathcal{A})$ satisfying  
\begin{eqnarray*}
       J_{k+1} := \bar{D} \chi_k = \mathrm{d} \chi_{k+1} \qquad  k=0,1,\ldots \, ,
\end{eqnarray*}
which only involves $a_0 = \phi$, but not $a_k$, $k>0$. Setting $\tilde{\chi}_0 = \chi_0$, we find 
\begin{eqnarray}
   \tilde{\chi}_k = \chi_k + \sum_{i+j=k-1 \atop i,j \geq 0} a_i \, \chi_j  \qquad  k=1,2,\ldots \, ,  \label{tchi-chi_rel_DMH}
\end{eqnarray}
modulo adding a $\mathrm{d}$-constant element of $\mathrm{Mat}(m,m,\mathcal{A})$. 

\begin{remark}
In \cite{Lore+Magri05_DMH} an infinite chain of constants of motion of integrable hierarchies\index{integrable hierarchy} 
of hydrodynamic type were constructed using Fr\"olicher-Nijenhuis theory\index{Fr\"olicher-Nijenhuis theory} 
(also see Section~\ref{subsec:FN_theory_DMH}). 
This is an example of our construction recalled in Section~\ref{subsec:conserv_DMH}, since the constants of motion  
are integrals of $\chi_k$, $k=0,1,\ldots$, which do not involve $a_k$, $k>0$. But the latter show up in the 
corresponding hierarchy of hydrodynamic type. We note that, in equation (2.24) of \cite{Lore+Magri05_DMH}, the inverse of 
the relation (\ref{tchi-chi_rel_DMH}) can be found, in different notation 
($a_i \mapsto -a_i$, $k_i = \chi_i$, $h_i = \tilde{\chi}_i$, $i=0,1,\ldots$).  
\end{remark}

All this motivates an exploration of the generalized scheme, presented above, in concrete examples. This will 
be left for future work.

\subsection{Integrable PDDEs obtained as realizations of equations in bidifferential calculus}
\label{subsec:PDDEs_in_bidiff_DMH} 
Given a unital associative algebra\index{algebra! associative} $\mathcal{A}$, a possible choice of $\boldsymbol{\Omega}$ is
\begin{eqnarray}
    \boldsymbol{\Omega} = \mathcal{A} \otimes \bigwedge \mathbb{C}^K \, ,    \label{Omega_wedge_DMH}
\end{eqnarray}
where $\bigwedge \mathbb{C}^K$ is the exterior algebra\index{algebra! exterior} of the vector space $\mathbb{C}^K$. In this case 
it is sufficient to define $\mathrm{d}$ and $\bar{\mathrm{d}}$ on $\mathcal{A}$, since they extend in an evident way 
to $\boldsymbol{\Omega}$, treating elements of $\bigwedge \mathbb{C}^K$ as $\mathrm{d}$- and $\bar{\mathrm{d}}$-constants. 
Let $\xi_1,\ldots,\xi_K$ be a basis of $\bigwedge^1 \mathbb{C}^K$. 
\vspace{.2cm}

\noindent
\textbf{Prototype example: self-dual Yang-Mills\index{Yang-Mills, self-dual}.} 
The (anti-) self-dual Yang-Mills equation on a flat four-dimensional space with Euclidean or split signature 
plays a prominent role as a four-dimensional integrable PDE system, from 
which many integrable PDEs can be derived via a reduction. It is also a prototype system in bidifferential 
calculus\index{bidifferential calculus}. Let $\mathcal{A}$ be the algebra of smooth complex functions of four 
(real or complex) variables $y, \bar{y}, z, \bar{z}$. Using (\ref{Omega_wedge_DMH}) with $K=2$, we define
\begin{eqnarray*}
    \mathrm{d} f = - f_y \, \xi_1 + f_z \, \xi_2 \, , \qquad
    \bar{\mathrm{d}} f = f_{\bar{z}} \, \xi_1 + f_{\bar{y}} \, \xi_2 \, ,
\end{eqnarray*}
on $\mathcal{A}$. Here a subscript indicates a partial derivative\index{partial derivative} with respect to the 
corresponding independent variable. 
With the above calculus, it follows that (\ref{phi_eq_DMH}) and (\ref{g_eq_DMH}) coincide with two well-known potential 
forms of the self-dual Yang-Mills  equation (cf. \cite{Maso+Wood96_DMH}), 
\begin{eqnarray*}
     \phi_{\bar{y} y} + \phi_{\bar{z} z} + [ \phi_y , \phi_z ] = 0  \, , \qquad
     (g_{\bar{y}} \, g^{-1})_y + (g_{\bar{z}} \, g^{-1})_z = 0 \, .
\end{eqnarray*}
Also see \cite{DMH08bidiff_DMH}, for example. 
\vspace{.2cm}

\noindent
\textbf{Further examples.}
Many integrable equations can be obtained as a reduction of the self-dual Yang-Mills equation\index{Yang-Mills, self-dual}. 
This holds in particular for the chiral model\index{chiral model} equation governing stationary and axially symmetric solutions 
of Einstein's equation\index{Einstein's equation} in electrovacuum, see \cite{DKMH11SIGMA_DMH,DMH13SIGMA_DMH} 
and references cited there. However, the Kadomtsev-Petviashvili (KP) equation\index{KP equation} apparently 
cannot be usefully obtained in this way. But it fits perfectly well into bidifferential calculus\index{bidifferential calculus}. 
Matrix versions of the most familiar integrable equations (including Korteweg-deVries\index{Korteweg-deVries equation} 
(KdV)\footnote{A bidifferential calculus for KdV is recalled 
in Example~\ref{ex:KdV_bidiff_DMH} in Section~\ref{sec:scs_DMH}.}, Boussinesq\index{Boussinesq equation}, 
Sine-Gordon\index{Sine-Gordon equation}, Nonlinear Schr\"odinger equation\index{Nonlinear Schr\"odinger equation} and its 
integrable discretizations, KP\index{KP equation}, Davey-Stewartson\index{Davey-Stewartson equation}, 
Toda\index{Toda lattice} and (generalized) Volterra lattices\index{Volterra lattice}, two-dimensional Toda lattice, 
discrete KP and Hirota bilinear difference equation\index{Hirota bilinear difference equation}) have been 
treated in the bidifferential calculus\index{bidifferential calculus} framework, and corresponding infinite families 
of exact (soliton) solutions have been generated. These examples are all based on a graded algebra\index{algebra! graded} 
of the form (\ref{Omega_wedge_DMH}). 
A comprehensive search for further examples still has to be carried out, with the aid of computer algebra.

\subsection{Compatible Lie algebra structures}
Let $\mathcal{V}$ be a vector space (over some field), supplied with a Lie algebra\index{algebra! Lie} structure, which 
is given by a skew-symmetric map $[ \, , \, ] : \mathcal{V} \otimes \mathcal{V} \to \mathcal{V}$, 
satisfying the Jacobi identity. 
Associated with it is the Chevalley-Eilenberg operator\index{Chevalley-Eilenberg operator} $\delta$, defined by
\begin{eqnarray*}
    (\delta \mathfrak{f})(v_1,v_2,\ldots,v_{k+1}) 
  &=& \sum_{i=1}^{k+1} (-1)^{i+1} [v_i , \mathfrak{f}(v_1,\ldots,\hat{v}_i,\ldots,v_{k+1})]  \\
  &&  + \sum_{i<j} (-1)^{i+j} \mathfrak{f}([v_i,v_j],v_1,\ldots,\hat{v}_i,\ldots,\hat{v}_j,\ldots,v_{k+1}) \, ,
\end{eqnarray*}
acting on alternating multi-linear maps $\mathfrak{f} : \bigwedge^k \mathcal{V} \rightarrow \mathcal{V}$. 
Here a hat indicates an omission.
Given two compatible Lie algebra\index{algebra! Lie} structures (which means that any linear combination is also a 
Lie algebra structure),   
the corresponding Chevalley-Eilenberg operators $\delta_1$ and $\delta_2$ on 
$\boldsymbol{\Omega} = \mathcal{V} \oplus \bigoplus_{k=1}^\infty \mathrm{Hom}(\bigwedge^k \mathcal{V},\mathcal{V})$ 
constitute a bidifferential calculus. 
The case of two compatible Poisson structures\index{Poisson structure} on $\mathcal{V} = C^\infty(M)$, on 
a manifold\index{manifold} $M$, makes contact with the framework 
of bi-Hamiltonian systems\index{bi-Hamiltonian system}, also see the next subsection.\footnote{F M-H had a 
very illuminating discussion about all this with Martin Bordemann in June 2000.}

\subsection{Fr\"olicher-Nijenhuis theory}
\label{subsec:FN_theory_DMH}
A \emph{Nijenhuis tensor}\index{Nijenhuis tensor} on a manifold\index{manifold} $M$ is a tensor field $N$ of type (1,1) (or a fiber 
preserving endomorphism of $TM$) on $M$, with vanishing Nijenhuis torsion\index{Nijenhuis torsion}, i.e.,  
\begin{eqnarray*}
    T(N)(X,Y) := [NX,NY] - N( [NX,Y] + [X,NY] - N [X,Y] ) = 0 \, ,
\end{eqnarray*}
where $X,Y$ are vector fields on $M$, and $[ \, , \, ]$ denotes the usual commutator on 
the space of vector fields. In the context of integrable systems, we should mention the notion of a 
\emph{Poisson-Nijenhuis structure}\index{Poisson-Nijenhuis structure} on $M$, which is given by a Poisson 
structure\index{Poisson structure} and a Nijenhuis tensor, satisfying a certain compatibility condition 
\cite{Magr+Moro84_DMH,Kosm+Magr90_DMH}. 
This structure has been shown to be an integrability\index{integrability} feature of many integrable Hamiltonian systems. 

Given a Nijenhuis tensor, a bidifferential calculus\index{bidifferential calculus} is obtained as follows. 
Let $\boldsymbol{\Omega}$ be the algebra of differential forms\index{algebra! of differential forms} on $M$. 
Then the exterior derivative\index{exterior derivative} $\mathrm{d}$ 
and $\bar{\mathrm{d}} = \mathrm{d}_N := i_N \, \mathrm{d}$ satisfy (\ref{bidiff_conds_DMH}). 
Here we look at $N$ as a vector-valued 1-form, and $i_N$ acts via contraction of a vector field and a 1-form.  
According to Fr\"olicher-Nijenhuis theory\index{Fr\"olicher-Nijenhuis theory} \cite{Froe+Nije56_DMH}, any 
derivation $\bar{\mathrm{d}}$ of degree one\index{derivation! of degree one}, which 
anti-commutes with $\mathrm{d}$ and satisfies $\bar{\mathrm{d}}^2=0$, is of the form $\mathrm{d}_N$. 

In \cite{Magr03_DMH} a \emph{recursion operator}\index{recursion operator} on a symplectic 
manifold\index{manifold! symplectic} $(M,\omega)$ 
is defined to be a Nijenhuis tensor that is compatible with the symplectic 2-form $\omega$. This means that 
$\omega(NX,Y)$ defines a 2-form and $\mathrm{d}_N \omega =0$. This reformulates a crucial structure of 
bi-Hamiltonian systems\index{bi-Hamiltonian system} (a subclass of integrable systems), which serves to construct a so-called 
\emph{Lenard chain}\index{Lenard chain} of conserved quantities in involution. We also refer to 
\cite{DMH00a_DMH,CST00a_DMH,Cram+Sarl02_DMH,Chav03_DMH,Chav05_DMH,Tondo06_DMH,Lore+Magri05_DMH,Lorenzoni06_DMH,Cama+More10,Arsi+Lore13_DMH} 
for related work. 

Using the (local $\mathrm{d}$-) exactness of ($\mathrm{d}$-) closed 1-forms, in \cite{Magr03_DMH} (also see \cite{Lore+Magri05_DMH}) 
the construction of a (generalized) Lenard chain\index{Lenard chain} is formulated concisely in terms 
of $\mathrm{d}$ and $\mathrm{d}_N$. 
This is a special case of the recursive construction of $\mathrm{d}$-closed 1-forms in 
bidifferential calculus\index{bidifferential calculus}, presented in Section~\ref{subsec:Magri_DMH}. 
In the present context, a bi-closed 1-form $J$ (cf. Section~\ref{subsec:conserv_DMH}) is defined by 
$\mathrm{d} J = 0 = \mathrm{d}_N J$ and called a \emph{fundamental 1-form}\index{fundamental 1-form} \cite{Magr+Moro84_DMH}. 
In \cite{Grif+Mehd97_DMH} it has also been called a conservation law\index{conservation law} for a type (1,1) tensor field $N$. 

\begin{remark}
Via $(N^\ast \alpha)(X) := \alpha(NX)$, $N$ determines an endomorphism $N^\ast$ of the space of 1-forms.
The (generalized) Hodge star operator\index{Hodge star operator} used in \cite{DMH96int_DMH,DMH97int_DMH} is a 
linear map on the space of 1-forms and thus plays the role of $N^\ast$.
\end{remark}

\section{Symmetries in bidifferential calculus}
\label{sec:sym_DMH}
Let us consider the system (\ref{zero_curv_reduced_DMH}). A \emph{symmetry}\index{symmetry} of it should be any transformation 
of $A$ that leaves the system invariant, after implementing the condition that $A$ is a solution. 
The first of equations (\ref{zero_curv_reduced_DMH}) is obviously invariant under
\begin{eqnarray*}
     A \mapsto A' = A + \mathrm{d} \sigma \, ,
\end{eqnarray*}
with a 0-form (element of $\boldsymbol{\Omega}^0$) $\sigma$. The symmetry condition obtained 
from the second of equations (\ref{zero_curv_reduced_DMH}) is then
\begin{eqnarray*}
    \mathrm{d} \Big( \bar{\mathrm{d}} \sigma - [ A , \sigma ] + \sigma \, \mathrm{d} \sigma \Big) = 0 \, ,
\end{eqnarray*}
where we used that $A$ shall be a solution of (\ref{zero_curv_reduced_DMH}). If we assume that the $\mathrm{d}$-closed 
1-form in the brackets is $\mathrm{d}$-exact, there is a ``potential'' $\rho$ such that
\begin{eqnarray*}
    \bar{\mathrm{d}} \sigma - [ A , \sigma ] + \sigma \, \mathrm{d} \sigma  = \mathrm{d} \rho \, .
\end{eqnarray*}
In the following we show that B\"acklund transformations\index{B\"acklund transformation} and 
Darboux transformations\index{Darboux transformation} are special 
classes of such symmetries. B\"acklund transformations arise via elimination of $\sigma$ from the symmetry\index{symmetry} conditions. 
In case of Darboux transformations, a solution $\sigma$ is constructed by use of the linear or adjoint linear system\index{linear system}.

\subsection{B\"acklund transformation} 
We show how B\"acklund transformations\index{B\"acklund transformation} emerge from the above class of symmetries.
If we set $\rho=0$ and use $\mathrm{d} \sigma = A' - A$, we obtain
\begin{eqnarray*}
    \bar{\mathrm{d}} \sigma =  A \, \sigma - \sigma \, A' \, .
\end{eqnarray*}
There are now the following two cases (also see \cite{DMH01bt_DMH,DMH08bidiff_DMH}). 
\vspace{.2cm}

\noindent
\textbf{1.} Writing $A = \mathrm{d} \phi$ and $A' = \mathrm{d} \phi'$, the equation $\mathrm{d} \sigma = A' - A$
is solved by 
\begin{eqnarray*}
    \sigma = \phi' - \phi + C \, ,
\end{eqnarray*}
where $\mathrm{d} C =0$. The remaining symmetry\index{symmetry} condition then reads
\begin{eqnarray*}
    \bar{\mathrm{d}} (\phi' - \phi + C) =  \mathrm{d} \phi \, (\phi' - \phi + C) - (\phi' - \phi + C) \, \mathrm{d} \phi' 
    \, , \qquad \mathrm{d} C =0 \, .
\end{eqnarray*}
This is a relation between two solutions of (\ref{phi_eq_DMH}) and thus constitutes a B\"acklund 
transformation for the equation (\ref{phi_eq_DMH}). 
\vspace{.2cm}

\noindent
\textbf{2.} Alternatively, using $A = (\bar{\mathrm{d}} g) \, g^{-1}$ and $A' = (\bar{\mathrm{d}} g') \, {g'}^{-1}$, we are led 
to
\begin{eqnarray*}
    \sigma = g \, K \, {g'}^{-1} \, , \qquad \bar{\mathrm{d}} K =0 \, ,
\end{eqnarray*}
and thus
\begin{eqnarray*}
    (\bar{\mathrm{d}} g') \, {g'}^{-1} -  (\bar{\mathrm{d}} g) \, g^{-1} = \mathrm{d} (g \, K \, {g'}^{-1}) \, , \qquad \bar{\mathrm{d}} K =0 \, .
\end{eqnarray*}
This is a B\"acklund transformation for the equation (\ref{g_eq_DMH}). 
\vspace{.2cm}

\noindent
In \cite{DMH01bt_DMH} we derived and recovered B\"acklund transformations from the above general 
expressions for quite a number of integrable equations. 

\subsubsection{Permutability}
Let $A_i, A_k$ be two solutions of (\ref{zero_curv_reduced_DMH}), related by a B\"acklund transformation, hence
\begin{eqnarray*}
   \mathrm{d} \sigma_{i,k} = A_k - A_i \, , \qquad
   \bar{\mathrm{d}} \sigma_{i,k} =  A_i \, \sigma_{i,k} - \sigma_{i,k} \, A_k \, .  
\end{eqnarray*}
This can be expressed as a refactorization condition,
\begin{eqnarray*}
     ( \bar{\mathrm{d}} - A_i - \lambda^{-1} \mathrm{d} ) \, \mathcal{B}(\lambda)_{i,k}
     = \mathcal{B}(\lambda)_{i,k} \, ( \bar{\mathrm{d}} - A_k - \lambda^{-1} \mathrm{d} )  \qquad \forall \lambda \, ,
\end{eqnarray*}
acting from the left on $\mathrm{Mat}(m,m,\Omega)$. Here we set
\begin{eqnarray*}
     \mathcal{B}(\lambda)_{i,k} = 1 + \lambda \, \sigma_{i,k} \, .
\end{eqnarray*}
Suppose there are two different chains of B\"acklund transformations, 
\begin{eqnarray*}
  A_i \stackrel{ \mathcal{B}(\lambda)_{i,k} }{\longmapsto} A_k \stackrel{ \mathcal{B}(\lambda)_{k,j} }{\longmapsto} A_j 
      \, , \qquad
  A_i \stackrel{ \mathcal{B}(\lambda)_{i,l} }{\longmapsto} A_{l} \stackrel{ \mathcal{B}(\lambda)_{l,j} }{\longmapsto} A_j  
   \, ,
\end{eqnarray*}
starting with the same solution and ending with the same solution. Then the refactorization equation implies
\begin{eqnarray*}
     \mathcal{B}(\lambda)_{i,k} \, \mathcal{B}(\lambda)_{k,j} = \mathcal{B}(\lambda)_{i,l} \, \mathcal{B}(\lambda)_{l,j} 
      \qquad \forall \lambda \, ,
\end{eqnarray*}
which requires the \emph{permutability conditions}
\begin{eqnarray*}
     \sigma_{i,k} + \sigma_{k,j} = \sigma_{i,l} + \sigma_{l,j} \, , \qquad
     \sigma_{i,k} \, \sigma_{k,j} = \sigma_{i,l} \, \sigma_{l,j} \, .
\end{eqnarray*}
Referring to the two cases above, this becomes
\begin{eqnarray*}
   (\phi_k - \phi_i + C_{i,k})(\phi_j - \phi_k + C_{k,j}) &=& (\phi_l - \phi_i + C_{i,l}) (\phi_j - \phi_l + C_{l,j}) \, , \\
   C_{i,k} + C_{k,j} &=& C_{i,l} + C_{l,j} \, ,
\end{eqnarray*}
respectively
\begin{eqnarray*}
   g_i K_{i,k} \, g_k^{-1} + g_k K_{k,j} \, g_j^{-1} &=& g_i K_{i,l} \, g_l^{-1} + g_l K_{l,j} \, g_j^{-1} \, , \\
   K_{i,k} \, K_{k,j} &=& K_{i,l} \, K_{l,j} \, .
\end{eqnarray*}
For concrete choices of bidifferential calculi\index{bidifferential calculus} associated with integrable 
PDDEs\index{PDDE! integrable}, corresponding ``permutability theorems''\index{permutability theorem} 
\cite{Roge+Shad82_DMH,Roge+Schi02_DMH} arise from these equations. If two solutions are obtained by (elementary) 
B\"acklund transformations (involving different parameters) applied to a given solution, then a fourth solution is 
determined in a purely algebraic way. 
See \cite{DMH01bt_DMH} for many examples, where permutability theorems are recovered via the 
bidifferential calculus\index{bidifferential calculus} approach.

\subsection{Forward Darboux transformation} 
For given solutions
$\Delta_\theta, \lambda_\theta$ of (\ref{Delta,lambda_eqs_DMH}), with $n=m$, and a solution $A_0$ of 
(\ref{zero_curv_reduced_DMH}), let $\theta \in \mathrm{Mat}(m,m,\mathcal{A})$ be an invertible solution of the 
linear system\index{linear system} 
\begin{eqnarray*}
    \bar{\mathrm{d}} \theta = A_0 \, \theta + (\mathrm{d} \theta) \, \Delta_\theta + \theta \, \lambda_\theta \, ,
\end{eqnarray*}
which is (\ref{theta_eq_A_DMH}) with $n=m$ and $A$ replaced by $A_0$. Here solutions of (\ref{Delta,lambda_eqs_DMH}) 
carry a subscript $\theta$ just to remind us of the fact that these are the solutions of (\ref{Delta,lambda_eqs_DMH})
that appear as coefficients in the linear equation for $\theta$. This convention will be used further on in the present 
section, but not beyond it. 

Setting 
\begin{eqnarray*}
     \sigma_{[+1]} = \theta \, \Delta_\theta \, \theta^{-1} \, ,
\end{eqnarray*}
it follows that
\begin{eqnarray*}
     A_{[+1]} = A_0 + \mathrm{d} \sigma_{[+1]} 
\end{eqnarray*}
also satisfies (\ref{zero_curv_reduced_DMH}). 
Let $\psi$ be a solution of the linear system\index{linear system}
\begin{eqnarray*}
    \bar{\mathrm{d}} \psi = A_0 \, \psi + (\mathrm{d} \psi) \, \Delta_\psi + \psi \, \lambda_\psi \, , 
\end{eqnarray*}
where $\Delta_\psi,\lambda_\psi$ satisfy (\ref{Delta,lambda_eqs_DMH}). Then
\begin{eqnarray*}
    \psi_{[+1]} = \sigma_{[+1]} \psi - \psi \, \Delta_\psi 
\end{eqnarray*}
satisfies
\begin{eqnarray*}
   \bar{\mathrm{d}} \psi_{[+1]} = A_{[+1]} \, \psi_{[+1]} + (\mathrm{d} \psi_{[+1]}) \, \Delta_\psi + \psi_{[+1]} \, \lambda_\psi \, .
\end{eqnarray*}
 
\subsection{Backward Darboux transformation} 
For given solutions $\Gamma_\eta, \kappa_\eta$ of (\ref{Gamma,kappa_eqs_DMH}) with $n=m$, and a solution $A_0$ of 
(\ref{zero_curv_reduced_DMH}), let $\eta \in \mathrm{Mat}(m,m,\mathcal{A})$ be an invertible solution of the 
adjoint linear system\index{linear system} 
\begin{eqnarray*}
   \bar{\mathrm{d}} \eta = - \eta \, A_0 + \Gamma_\eta \, \mathrm{d} \eta + \kappa_\eta \, \eta \, , 
\end{eqnarray*}
which is (\ref{eta_eq_A_DMH}) with $n=m$, and $A_0$ instead of $A$. Setting 
\begin{eqnarray*}
     \sigma_{[-1]} = -\eta^{-1} \, \Gamma_\eta \, \eta \, ,
\end{eqnarray*}
it follows that
\begin{eqnarray*}
     A_{[-1]} = A_0 + \mathrm{d} \sigma_{[-1]} 
\end{eqnarray*}
also satisfies (\ref{zero_curv_reduced_DMH}). 
Let $\chi$ be a solution of the linear system\index{linear system}
\begin{eqnarray*}
     \bar{\mathrm{d}} \chi = - \chi \, A_0 + \Gamma_\chi \, \mathrm{d} \chi + \kappa_\chi \, \chi \, , 
\end{eqnarray*}
where $\Gamma_\chi,\kappa_\chi$ satisfy (\ref{Gamma,kappa_eqs_DMH}). Then
\begin{eqnarray*}
    \chi_{[-1]} = \Gamma_\chi \chi + \chi \, \sigma_{[-1]}  
\end{eqnarray*}
satisfies
\begin{eqnarray*}
   \bar{\mathrm{d}} \chi_{[-1]} = - \chi_{[-1]} A_{[-1]} + \Gamma_\chi \, \mathrm{d} \chi_{[-1]} + \kappa_\chi \, \chi_{[-1]}  \, .
\end{eqnarray*}

\subsection{Binary Darboux transformation}
\label{subsec:bDt_DMH}
The following essentially generalizes the binary Darboux transformation method as formulated, e.g., in \cite{Matv+Sall91_DMH},  
to bidifferential calculus\index{bidifferential calculus}. 
In order to combine forward and backward Darboux transformations, we have to look for a transformation of 
$\chi$ under the forward Darboux transformation and a transformation of $\psi$ under the backward Darboux transformation. 
Let $\Omega(\chi,\psi)$ satisfy the consistent linear equations
\begin{eqnarray*}
  \Gamma_\chi \, \Omega(\chi,\psi) - \Omega(\chi,\psi) \, \Delta_\psi &=& \chi \, \psi \, ,   \\
  \bar{\mathrm{d}} \Omega(\chi,\psi) &=& (\mathrm{d} \Omega(\chi,\psi)) \, \Delta_\psi - (\mathrm{d} \Gamma_\chi) \, \Omega(\chi,\psi) \\
  &&   + (\mathrm{d} \chi) \, \psi + \kappa_\chi \, \Omega(\chi,\psi) + \Omega(\chi,\psi) \, \lambda_\psi  \, .  
\end{eqnarray*}
Then
\begin{eqnarray*}
   \psi_{[-1]} := \eta^{-1} \Omega(\eta,\psi) \, , \qquad
   \chi_{[+1]} := \Omega(\chi,\theta) \, \theta^{-1} \, , 
\end{eqnarray*}
indeed satisfy the linear equations
\begin{eqnarray*}
  && \bar{\mathrm{d}} \psi_{[-1]} = A_{[-1]} \, \psi_{[-1]} + (\mathrm{d} \psi_{[-1]}) \, \Delta_\psi + \psi_{[-1]} \, \lambda_\psi \, , \\
  && \bar{\mathrm{d}} \chi_{[+1]} = - \chi_{[+1]} \, A _{[+1]} + \Gamma_\chi \, \mathrm{d} \chi_{[+1]} + \kappa_\chi \, \chi_{[+1]} \, .
\end{eqnarray*}
Inserting
\begin{eqnarray*}
     \theta_{[-1]} = \eta^{-1} \Omega(\eta,\theta) \, , \qquad
     \eta_{[+1]} = \Omega(\eta,\theta) \, \theta^{-1}       \, ,
\end{eqnarray*}
in the expressions for $\sigma_{[\pm 1]}$, yields
\begin{eqnarray*}
  &&  \sigma_{[+1,-1]} := - \eta_{[+1]}^{-1} \, \Gamma_\eta \, \eta_{[+1]} 
                     = - \theta \, \Omega(\eta,\theta)^{-1} \Gamma_\eta \, \Omega(\eta,\theta) \, \theta^{-1}   \, , \\
  &&  \sigma_{[-1,+1]} := \theta_{[-1]} \, \Delta_\theta \, \theta_{[-1]}^{-1} 
                       = \eta^{-1} \Omega(\eta,\theta) \Delta_\theta \, \Omega(\eta,\theta)^{-1} \eta \, ,
\end{eqnarray*}
and we obtain
\begin{eqnarray*}
    A_{[+1,-1]} &:=& A_{[+1]} + \mathrm{d} \sigma_{[+1,-1]} = A_0 - \mathrm{d} (\theta \, \Omega(\eta,\theta)^{-1} \eta) \\ 
      &=& A_{[-1]} + \mathrm{d} \sigma_{[-1,+1]} ) =: A_{[-1,+1]} \, .
\end{eqnarray*}
Furthermore, 
\begin{eqnarray*}
  &&  \chi_{[+1,-1]} = \Gamma_\chi \, \chi_{[+1]} + \chi_{[+1]} \, \sigma_{[+1,-1]}
                    = \chi - \Omega(\chi,\theta) \, \Omega(\eta,\theta)^{-1} \eta \, , \\
  &&  \psi_{[-1,+1]} = \sigma_{[-1,+1]} \, \psi_{[-1]} - \psi_{[-1]} \Delta_\theta 
                      = \psi - \theta \, \Omega(\eta,\theta)^{-1} \Omega(\eta,\psi) \, ,
\end{eqnarray*}
satisfy the linear equations
\begin{eqnarray*}
  && \bar{\mathrm{d}} \chi_{[+1,-1]} = - \chi_{[+1,-1]} \, A_{[+1,-1]} + \Gamma_\chi \, \mathrm{d} \chi_{[+1,-1]} + \kappa_\chi \, \chi_{[+1,-1]} \, , \\
  && \bar{\mathrm{d}} \psi_{[-1,+1]} = A_{[-1,+1]} \, \psi_{[-1,+1]} + (\mathrm{d} \psi_{[-1,+1]}) \, \Delta_\psi
                          + \psi_{[-1,+1]} \, \lambda_\psi   \, .
\end{eqnarray*}
The transformation
\begin{eqnarray*}
    (A_0,\psi,\chi) \longmapsto (A_{[+1,-1]},\psi_{[-1,+1]},\chi_{[+1,-1]})
\end{eqnarray*}
is the \emph{binary Darboux transformation}.
\vspace{.2cm}

\noindent
\textbf{1.} Using $A = \mathrm{d} \phi$, we see that 
\begin{eqnarray*}
    \phi_{[+1,-1]} = \phi_{[-1,+1]} = \phi_0 - \theta \, \Omega(\eta,\theta)^{-1} \eta 
\end{eqnarray*}
solves (\ref{phi_eq_DMH}).
\vspace{.2cm}

\noindent
\textbf{2.} Using instead $A = (\bar{\mathrm{d}} g) \, g^{-1}$, we conclude that
\begin{eqnarray*}
    g_{[-1,+1]} = g_{[+1,-1]} = (1 - \theta \, \Omega(\eta,\theta)^{-1} \Gamma_\eta^{-1} \eta) \, g_0 
\end{eqnarray*}
solves (\ref{g_eq_DMH}). We note that
\begin{eqnarray*} 
  g_{[-1,+1]}^{-1} = g_0^{-1} (1 + \theta \, \Delta_\theta^{-1} \Omega(\eta,\theta)^{-1} \eta) \, .    
\end{eqnarray*}

\begin{remark}
Setting
\begin{eqnarray*}
    q = \eta_{[+1]}^{-1} \, , \qquad
    r = \theta_{[-1]}^{-1} \, ,
\end{eqnarray*}
we find that they satisfy the linear system\index{linear system}
\begin{eqnarray*}
   && \bar{\mathrm{d}} q = A_{[+1,-1]} \, q + (\mathrm{d} q) \, \Gamma_\eta + q \, (\mathrm{d} \Gamma_\eta - \kappa_\eta) \, , \\
   && \bar{\mathrm{d}} r = - r \, A_{[-1,+1]} + \Delta_\theta \, \mathrm{d} r + (\mathrm{d} \Delta_\theta - \lambda_\theta) \, r \, ,
\end{eqnarray*}
 from which we read off that $\Delta_q = \Gamma_\eta$, $\lambda_q = \mathrm{d} \Gamma_\eta - \kappa_\eta$,  
$\Gamma_r = \Delta_\theta$ and $\kappa_r = \mathrm{d} \Delta_\theta - \lambda_\theta$. 
Furthermore, we have
\begin{eqnarray*}
    \Delta_\theta \, \Omega(r,q) - \Omega(r,q) \, \Gamma_\eta = r \, q \, , 
\end{eqnarray*}
and
\begin{eqnarray*}
   \bar{\mathrm{d}} \Omega(r,q) &=& (\mathrm{d} \Omega(r,q)) \, \Gamma_\eta - (\mathrm{d} \Delta_\theta) \, \Omega(r,q) 
     + (\mathrm{d} r) \, q \\
     && + (\mathrm{d} \Delta_\theta - \lambda_\theta) \, \Omega(r,q) + \Omega(r,q) \, (\mathrm{d} \Gamma_\eta - \kappa_\eta) \\
     &=& (\mathrm{d} \Omega(r,q) \, \Gamma_\eta) + (\mathrm{d} r) \, q - \Omega(r,q) \, \kappa_\eta - \lambda_\theta \, \Omega(r,q) \, .
\end{eqnarray*}
Inserting our expressions for $q$ and $r$, we can conclude that these equations are solved by setting
\begin{eqnarray*}
    \Omega(r,q) = - \Omega(\eta,\theta)^{-1} \, .
\end{eqnarray*}
Moreover, one can show that 
\begin{eqnarray*}
    \Omega(r,\psi_{[-1,+1]}) = \Omega(\eta,\theta)^{-1} \Omega(\eta,\psi) \, , \qquad
    \Omega(\chi_{[+1,-1]},q) = \Omega(\chi,\theta) \, \Omega(\eta,\theta)^{-1}  \, .
\end{eqnarray*}
Using $q$ and $r$ in the binary Darboux transformation, we find
\begin{eqnarray*}
  && A_{[+1,-1]} -\mathrm{d}( q \, \Omega(r,q)^{-1} r) = A_0 \, , \\
  && \psi_{[-1,+1]} - q \, \Omega(r,q)^{-1} \Omega(r,\psi_{[-1,+1]}) = \psi  \, , \\
  &&  \chi_{[-1,+1]} - \Omega(\chi_{[-1,+1]},q) \, \Omega(r,q)^{-1} = \chi  \, . 
\end{eqnarray*}
Hence this leads back to what we started with.
\end{remark}

 From the above Darboux transformations\index{Darboux transformation} one recovers known Darboux transformations 
\cite{Matv+Sall91_DMH} for many integrable systems via corresponding choices of the 
bidifferential calculus\index{bidifferential calculus}.

\section{A matrix version of the binary Darboux transformation}
\label{sec:bDt_DMH}
Now we drop the restriction $n=m$, imposed in the preceding section. Let us recall that 
(\ref{phi_eq_DMH}) is \emph{integrable} in the sense that it arises as the integrability\index{integrability} 
condition of the linear system\index{linear system} 
\begin{eqnarray}
    \bar{\mathrm{d}} \theta = (\mathrm{d} \phi) \, \theta + (\mathrm{d} \theta) \, \Delta + \theta \, \lambda \, .  \label{theta_eq_DMH}         
\end{eqnarray}
This is (\ref{theta_eq_A_DMH}) with $A = \mathrm{d}\phi$. Of course, $\Delta$ and $\lambda$ have to satisfy 
(\ref{Delta,lambda_eqs_DMH}). 

(\ref{phi_eq_DMH}) is also the integrability\index{integrability} condition of the ``adjoint'' linear system\index{linear system}
\begin{eqnarray}
    \bar{\mathrm{d}} \eta = - \eta \, \mathrm{d} \phi + \Gamma \, \mathrm{d} \eta + \kappa \, \eta \, ,   \label{eta_eq_DMH}
\end{eqnarray}
where $\Gamma$ and $\kappa$ have to solve (\ref{Gamma,kappa_eqs_DMH}). 

Let $\Omega$ be a solution of the linear equations
\begin{eqnarray}
  &&  \Gamma \, \Omega - \Omega \, \Delta = \eta \, \theta \, ,   \label{Omega_Sylvester_DMH} \\
  &&  \bar{\mathrm{d}} \Omega = (\mathrm{d} \Omega) \, \Delta - (\mathrm{d} \Gamma) \, \Omega 
     + \kappa \, \Omega + \Omega \, \lambda + (\mathrm{d} \eta) \, \theta \, .  \label{bd_Omega_DMH} 
\end{eqnarray}
The equation obtained by acting with $\bar{\mathrm{d}}$ on (\ref{Omega_Sylvester_DMH}) 
is identically satisfied as a consequence of (\ref{theta_eq_DMH}), (\ref{eta_eq_DMH}), (\ref{Omega_Sylvester_DMH}), 
and the equation that results from (\ref{Omega_Sylvester_DMH}) by acting with $\mathrm{d}$ on it.   
Correspondingly, also the equation that results from acting with $\bar{\mathrm{d}}$ on (\ref{bd_Omega_DMH}) is identically satisfied 
as a consequence of the preceding equations. If $\phi_0$ is a given solution of (\ref{phi_eq_DMH}), it follows 
\cite{DMH08bidiff_DMH,DMH13SIGMA_DMH} that
\begin{eqnarray*}
   \phi = \phi_0 - \theta \, \Omega^{-1} \, \eta 
\end{eqnarray*}
is a new solution of (\ref{phi_eq_DMH}), and 
\begin{eqnarray*}  
   q = \theta \, \Omega^{-1} \, , \qquad 
   r = \Omega^{-1} \, \eta 
\end{eqnarray*}
satisfy 
\begin{eqnarray*}
    \bar{\mathrm{d}} q = (\mathrm{d} \phi) \, q + \mathrm{d}( q \, \Gamma) - q \, \kappa  \, ,  \qquad
    \bar{\mathrm{d}} r = - r \, \mathrm{d} \phi + \mathrm{d}( \Delta \, r) - \lambda \, r  \, .       
\end{eqnarray*}

There is an analogous solution-generating result for equation (\ref{g_eq_DMH}), see \cite{CDMH16_DMH} and references 
cited there.

\section{An infinite system of equations in bidifferential calculus with solutions generated 
        by the binary Darboux transformation}
\label{sec:recurrence_DMH}        
Are there perhaps other equations in bidifferential calculus\index{bidifferential calculus} that can also be solved via the binary 
Darboux transformation\index{Darboux transformation}? 
Let 
\begin{eqnarray*}
    \Phi_{i,j} = \theta \, \Delta^i \Omega^{-1} \Gamma^j \, \eta   \qquad \quad i,j \in \mathbb{Z} \, .
\end{eqnarray*}
A corresponding generating function is
\begin{eqnarray*}
     \Phi(s,t) = \theta \, (1-s \, \Delta)^{-1} \Omega^{-1} (1-t \, \Gamma)^{-1} \eta \, ,
\end{eqnarray*}
which has the following formal power series expansions, 
\begin{eqnarray*}
   && \Phi(s,t) = \sum_{i,j=0}^\infty \Phi_{i,j} \, s^i t^j \qquad \mbox{around } (s,t)=(0,0) \, , \\
   && \Phi(s,t) = - \sum_{i=1,j=0}^\infty \Phi_{-i,j} \, s^{-i} t^j \qquad \mbox{around } (s,t)=(\infty,0) \, , \\
   && \Phi(s,t) = - \sum_{i=0,j=1}^\infty \Phi_{i,-j} \, s^i t^{-j} \qquad \mbox{around } (s,t)=(0,\infty) \, , \\
   && \Phi(s,t) = \sum_{i,j=1}^\infty \Phi_{-i,-j} \, s^{-i} t^{-j} \qquad \mbox{around } (s,t)=(\infty,\infty) \, .
\end{eqnarray*}

\begin{proposition}
The following recurrence relation holds,
\begin{eqnarray}
     \bar{\mathrm{d}} \Phi_{i,j} = \mathrm{d} \Phi_{i,j+1} - (\mathrm{d} \Phi_{i,0}) \, \Phi_{0,j} = \mathrm{d} \Phi_{i+1,j} +  \Phi_{i,0} \, \mathrm{d} \Phi_{0,j} 
                   \qquad  i,j \in \mathbb{Z}  \, .       \label{Phi_first_order_recurrence_DMH}
\end{eqnarray}
\end{proposition}
\begin{proof}
By a direct computation we obtain
\begin{eqnarray*}
 \bar{\mathrm{d}} \Phi(s,t) + [ \mathrm{d} \Phi(s,0) ] \, \Phi(0,t) = t^{-1} \mathrm{d} [ \Phi(s,t) - \Phi(s,0) ] \, .
\end{eqnarray*}
Expansion leads to the asserted sequence of equations.
\end{proof}

Acting with $\mathrm{d}$ on (\ref{Phi_first_order_recurrence_DMH}) implies
\begin{eqnarray}
     \mathrm{d} \bar{\mathrm{d}} \Phi_{i,j} - \mathrm{d} \Phi_{i,0} \, \mathrm{d} \Phi_{0,j} = 0  \qquad \quad  i,j \in \mathbb{Z}  \, .   \label{Phi_eqs_recurrence_DMH}
\end{eqnarray}
In particular, setting $i=j=0$, we see that $\phi = - \Phi_{0,0}$ solves (\ref{phi_eq_DMH}), which is the binary Darboux transformation 
result for vanishing seed solution. 

\begin{example}
Choosing $i=0$ and $j=-1$ in (\ref{Phi_eqs_recurrence_DMH}), leads to 
\begin{eqnarray*}
     \bar{\mathrm{d}} (1 - \Phi_{0,-1}) + (\mathrm{d} \Phi_{0,0}) \, (1 - \Phi_{0,-1}) = 0 \, .
\end{eqnarray*}
As a consequence, if $g := (1-\Phi_{0,-1}) \, g_0$, where $g_0$ is $\mathrm{d}$- and $\bar{\mathrm{d}}$-constant, is invertible, then it solves (\ref{g_eq_DMH}). 
In a similar way, choosing $i=-1$ and $j=0$, we find that $h := h_0 \, (1 + \Phi_{-1,0})$, with
a $\mathrm{d}$- and $\bar{\mathrm{d}}$-constant $h_0$, solves $\mathrm{d}(h^{-1} \, \bar{\mathrm{d}} h) =0$, provided its inverse exists.  
\end{example}

\begin{lemma}
\begin{eqnarray*}
    \Phi_{i,-1} \, \Phi_{-1,j} = \Phi_{i-1,j} - \Phi_{i,j-1} \qquad \quad i,j \in \mathbb{Z}  \, .
\end{eqnarray*}
\end{lemma}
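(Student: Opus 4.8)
The plan is to prove the identity by a direct algebraic computation, relying only on the definition $\Phi_{i,j} = \theta\,\Delta^i\,\Omega^{-1}\,\Gamma^j\,\eta$ and the Sylvester-type relation (\ref{Omega_Sylvester}), namely $\Gamma\,\Omega - \Omega\,\Delta = \eta\,\theta$. Unlike the preceding Proposition, this statement is purely algebraic: it involves neither $\d$ nor $\bd$, so the bidifferential structure plays no role and the whole argument reduces to matrix manipulations over $\cA$, carried out with due care for non-commutativity.

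The key observation is that in the product $\Phi_{i,-1}\,\Phi_{-1,j}$ the two inner factors $\eta$ and $\theta$ meet, producing the combination $\eta\,\theta$ in the middle:
\bez
   \Phi_{i,-1}\,\Phi_{-1,j} = \theta\,\Delta^i\,\Omega^{-1}\,\Gamma^{-1}\,(\eta\,\theta)\,\Delta^{-1}\,\Omega^{-1}\,\Gamma^j\,\eta \, .
\eez
First I would substitute $\eta\,\theta = \Gamma\,\Omega - \Omega\,\Delta$ from (\ref{Omega_Sylvester}). The factor $\Gamma^{-1}(\Gamma\,\Omega - \Omega\,\Delta)\Delta^{-1}$ then collapses to $\Omega\,\Delta^{-1} - \Gamma^{-1}\,\Omega$, and sandwiching this between the surrounding factors $\Omega^{-1}(\,\cdot\,)\Omega^{-1}$ yields $\Delta^{-1}\Omega^{-1} - \Omega^{-1}\Gamma^{-1}$. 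Reassembling the two resulting terms as $\theta\,\Delta^{i-1}\Omega^{-1}\Gamma^j\eta - \theta\,\Delta^i\Omega^{-1}\Gamma^{j-1}\eta$, I would recognize them as $\Phi_{i-1,j}$ and $\Phi_{i,j-1}$, which is exactly the claim. The mechanism is a telescoping shift: the Sylvester relation converts the single insertion $\eta\,\theta$ into a difference that lowers the $\Delta$-exponent by one in one term and the $\Gamma$-exponent by one in the other.

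The computation is short and essentially forced once the substitution is made, so there is no deep obstacle; the only point requiring attention is the invertibility of $\Delta$, $\Gamma$ and $\Omega$. Negative powers $\Delta^{-1}$, $\Gamma^{-1}$, $\Omega^{-1}$ occur at several places, and these are precisely what is needed for $\Phi_{i,j}$ to be defined for all $i,j\in\bbZ$, as in the statement; I would take this invertibility as a standing assumption, implicit in allowing negative indices. I would also check that the ordering of factors is preserved at every step, since the $\Phi_{i,j}$ are matrices over the noncommutative algebra $\cA$: in particular $\Gamma$ commutes with its own powers and $\Omega$ with $\Omega^{-1}$, but no other commutativity is invoked. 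Because the derivation makes no use of the particular values of the exponents, the identity holds uniformly for all integers $i$ and $j$.
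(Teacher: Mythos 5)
Your proof is correct, and it is genuinely more direct than the one in the paper. You substitute the Sylvester relation $\Gamma\,\Omega-\Omega\,\Delta=\eta\,\theta$ straight into the product $\Phi_{i,-1}\,\Phi_{-1,j}=\theta\,\Delta^i\,\Omega^{-1}\,\Gamma^{-1}(\eta\,\theta)\,\Delta^{-1}\,\Omega^{-1}\,\Gamma^j\,\eta$, and the middle factor collapses to $\Delta^{-1}\Omega^{-1}-\Omega^{-1}\Gamma^{-1}$, giving the telescoped difference $\Phi_{i-1,j}-\Phi_{i,j-1}$ in one step; I checked the ordering and it is sound. The paper instead works with the generating function $\Phi(s,t)$: it first derives $\Phi(s,0)\,\Phi(0,t)=t^{-1}[\Phi(s,t)-\Phi(s,0)]-s^{-1}[\Phi(s,t)-\Phi(0,t)]$, expands to get the relation $\Phi_{i+1,j}-\Phi_{i,j+1}+\Phi_{i,0}\,\Phi_{0,j}=0$ (which is exactly your computation performed with the $\eta\,\theta$ insertion at exponent $0$ rather than $-1$), and then reaches the stated identity by repeated application of that relation together with the commutativity $\Phi_{-1,0}\,\Phi_{0,-1}=\Phi_{0,-1}\,\Phi_{-1,0}$, itself a consequence of (\ref{Omega_Sylvester}). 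Your route avoids that auxiliary commutativity fact and the multi-step rearrangement entirely; what the paper's route buys is consistency with the generating-function formalism used throughout the section and the intermediate identity $\Phi_{i,0}\,\Phi_{0,j}=\Phi_{i,j+1}-\Phi_{i+1,j}$, which is useful in its own right. Your remark that invertibility of $\Delta$, $\Gamma$ and $\Omega$ is a standing assumption is appropriate, since it is already implicit in defining $\Phi_{i,j}$ for all $i,j\in\bbZ$.
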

\begin{proof}
A direct computation shows that
\begin{eqnarray*}
    \Phi(s,0) \, \Phi(0,t) = t^{-1} [\Phi(s,t) - \Phi(s,0)] - s^{-1} [\Phi(s,t)-\Phi(0,t)]  \, .
\end{eqnarray*}    
Expansion yields
\begin{eqnarray*}
    \Phi_{i+1,j} - \Phi_{i,j+1} + \Phi_{i,0} \, \Phi_{0,j} = 0  \qquad \quad  i,j \in \mathbb{Z}  \, .
\end{eqnarray*}
By repeated application of this relation, we find
\begin{eqnarray*}
    \Phi_{i,-1} \, \Phi_{-1,j} &=& (\Phi_{i-1,0} - \Phi_{i-1,0} \, \Phi_{0,-1}) (\Phi_{0,j-1} + \Phi_{-1,0} \, \Phi_{0,j-1}) \\
    &=& \Phi_{i-1,0} \, \Phi_{0,j-1} + \Phi_{i-1,0} \, ( \Phi_{-1,0} - \Phi_{0,-1} - \Phi_{0,-1} \, \Phi_{-1,0} ) \, \Phi_{0,j-1} \\
    &=& \Phi_{i-1,j} - \Phi_{i,j-1} + \Phi_{i-1,0} \, (\Phi_{-1,0} \, \Phi_{0,-1} - \Phi_{0,-1} \, \Phi_{-1,0}) \, \Phi_{0,j-1} \\
    &=& \Phi_{i-1,j} - \Phi_{i,j-1} \, ,
\end{eqnarray*}
using $\Phi_{-1,0} \, \Phi_{0,-1} = \Phi_{0,-1} \, \Phi_{-1,0}$, which is a consequence of (\ref{Omega_Sylvester_DMH}).
\end{proof}

\begin{proposition}
\begin{eqnarray*}
    \bar{\mathrm{d}} \mathrm{d} \Phi_{i,j} = - (\bar{\mathrm{d}} \Phi_{i,-1}) \, \bar{\mathrm{d}} \Phi_{-1,j}  \qquad  \quad  i,j \in \mathbb{Z}  \, .
\end{eqnarray*}
\end{proposition}
\begin{proof} 
A special case of (\ref{Phi_first_order_recurrence_DMH}) is
\begin{eqnarray*}
    (\bar{\mathrm{d}} \Phi_{i,-1}) \, \Phi_{-1,j} = (\mathrm{d} \Phi_{i,0}) \, ( \Phi_{-1,j} - \Phi_{0,-1} \, \Phi_{-1,j} ) \, .
\end{eqnarray*}
Using the lemma, this becomes
\begin{eqnarray*}
    (\bar{\mathrm{d}} \Phi_{i,-1}) \, \Phi_{-1,j} = (\mathrm{d} \Phi_{i,0}) \, \Phi_{0,j-1} \, .
\end{eqnarray*}
 From (\ref{Phi_first_order_recurrence_DMH}) we now obtain
\begin{eqnarray*}
     \mathrm{d} \Phi_{i,j} 
   = \bar{\mathrm{d}} \Phi_{i,j-1} + (\mathrm{d} \Phi_{i,0}) \, \Phi_{0,j-1} 
   = \bar{\mathrm{d}} \Phi_{i,j-1} + (\bar{\mathrm{d}} \Phi_{i,-1}) \, \Phi_{-1,j} \, ,
\end{eqnarray*}
which implies our assertion by acting with $\bar{\mathrm{d}}$ on it.   
\end{proof}

Choosing $i=j=-1$ in the preceding proposition, we obtain the next result.

\begin{corollary}
$\Phi_{-1,-1}$ solves 
\begin{eqnarray*}
     \bar{\mathrm{d}} \mathrm{d} \varphi + \bar{\mathrm{d}} \varphi \, \bar{\mathrm{d}} \varphi = 0 \, ,
\end{eqnarray*}
which is (\ref{phi_eq_DMH}) with $\mathrm{d}$ and $\bar{\mathrm{d}}$ exchanged. 
\end{corollary}

\begin{remark}
The construction of equations in this section, and corresponding solutions, is 
analogous to the ``Cauchy matrix approach''\index{Cauchy matrix}, see \cite{Zhan+Zhao13_DMH,XZZ14_DMH,HJN16_DMH}.
\end{remark}

\section{Deformation of binary Darboux transformations and integrable systems with sources}
\label{sec:scs_DMH}
Let us replace $\Omega$ by $\Omega - \omega$ in the equations of Section~\ref{sec:bDt_DMH}, i.e., 
\begin{eqnarray*}
  &&  \Gamma \, (\Omega - \omega) - (\Omega - \omega) \, \Delta = \eta \, \theta \, , \\
  &&  \bar{\mathrm{d}} (\Omega - \omega) = \mathrm{d} (\Omega - \omega) \, \Delta - (\mathrm{d} \Gamma) \, (\Omega - \omega) 
     + \kappa \, (\Omega - \omega) + (\Omega - \omega) \, \lambda + (\mathrm{d} \eta) \, \theta  \, .
\end{eqnarray*}
Hence
\begin{eqnarray}
  &&  \Gamma \, \Omega - \Omega \, \Delta = \eta \, \theta + c \, , \nonumber  \\
  &&  \bar{\mathrm{d}} \Omega = (\mathrm{d} \Omega) \, \Delta - (\mathrm{d} \Gamma) \, \Omega 
     + \kappa \, \Omega + \Omega \, \lambda + (\mathrm{d} \eta) \, \theta + \gamma \, ,   \label{scs_Omega_eqs_DMH}
\end{eqnarray}
where
\begin{eqnarray}
     c := \Gamma \, \omega - \omega \, \Delta \, , \qquad
    \gamma := \bar{\mathrm{d}} \omega - (\mathrm{d} \omega) \, \Delta + (\mathrm{d} \Gamma) \, \omega - \kappa \, \omega 
              - \omega \, \lambda \, .
                                   \label{c,gamma_in_terms_of_omega_DMH}
\end{eqnarray}
We note that they satisfy 
\begin{eqnarray}
   && \bar{\mathrm{d}} \gamma = (\mathrm{d} \gamma) \, \Delta - (\mathrm{d} \Gamma) \, \gamma + \kappa \, \gamma - \gamma \, \lambda 
      - (\mathrm{d} \kappa) \, c \, , \nonumber \\
   && \bar{\mathrm{d}} c = (\mathrm{d} c) \, \Delta + \kappa \, c + c \, \lambda + \Gamma \, \gamma - \gamma \, \Delta \, .   
       \label{gamma,c_constraints_DMH}
\end{eqnarray}
By straightforward computations, one proves the following. 

\begin{theorem}[\cite{CDMH16_DMH}]
\label{thm:scs_DMH}
Let $\Delta, \Gamma, \kappa, \lambda$ satisfy (\ref{Delta,lambda_eqs_DMH}) and (\ref{Gamma,kappa_eqs_DMH}).
Let $\phi_0$ be a solution of (\ref{phi_eq_DMH}) and $\theta$, $\eta$, $\Omega$ satisfy the linear equations 
(\ref{theta_eq_DMH}), (\ref{eta_eq_DMH}) and (\ref{scs_Omega_eqs_DMH}), respectively.
Then
\begin{eqnarray}
    \phi = \phi_0 - \theta \, \Omega^{-1} \, \eta \, , \qquad
       q = \theta \, \Omega^{-1} \, , \qquad 
       r = \Omega^{-1} \, \eta  \, ,  \label{phi,q,r_DMH}
\end{eqnarray}
are solutions of
\begin{eqnarray}
    \mathrm{d} \, \bar{\mathrm{d}} \, \phi + \mathrm{d} \phi \; \mathrm{d} \phi = \mathrm{d} ( q \, \gamma \, r - q \, \mathrm{d} (c \, r) )     
                \label{scs_phi_eq_DMH}
\end{eqnarray}
and 
\begin{eqnarray}
    \bar{\mathrm{d}} q &=& (\mathrm{d} \phi) \, q + \mathrm{d}( q \, \Gamma) - q \, \kappa 
        - q \, \gamma \, \Omega^{-1} - (\mathrm{d} q) \, c \, \Omega^{-1} \, ,  \nonumber \\
    \bar{\mathrm{d}} r &=& - r \, \mathrm{d} \phi + \mathrm{d}( \Delta \, r) - \lambda \, r  
        - \Omega^{-1} \, \gamma \, r + \Omega^{-1} \, \mathrm{d}(c \, r) \, .       \label{scs_q,r_eqs_DMH}
\end{eqnarray}
\end{theorem}

\begin{remark}
The above equations form a consistent system in the sense that any equation derived from it by acting with 
$\bar{\mathrm{d}}$ on any of its members yields an equation that is satisfied as a consequence of the system. 
\end{remark}

\begin{remark}
\label{rem:phi,q,r,hatOmega_sys_DMH}
Using the last two equations in (\ref{phi,q,r_DMH}), we eliminate $\theta$ and $\eta$ in (\ref{scs_Omega_eqs_DMH}) 
to obtain
\begin{eqnarray*}
  &&  \Delta \, \hat{\Omega} - \hat{\Omega} \, \Gamma = r \, q + \hat{\Omega} \, c \, \hat{\Omega} \, , \\
  &&  \bar{\mathrm{d}} \hat{\Omega} = \mathrm{d} (\hat{\Omega} \Gamma) - \hat{\Omega} \, \kappa - \lambda \, \hat{\Omega} + (\mathrm{d} r) \, q
                        + \left( (\mathrm{d} \hat{\Omega}) \, c + \hat{\Omega} \, \gamma \right) \, \hat{\Omega} \, ,
\end{eqnarray*}
where $\hat{\Omega} = - \Omega^{-1}$. (\ref{scs_q,r_eqs_DMH}) then reads
\begin{eqnarray*}
    \bar{\mathrm{d}} q &=& (\mathrm{d} \phi) \, q + \mathrm{d}( q \, \Gamma) - q \, \kappa 
        + q \, \gamma \, \hat{\Omega} + (\mathrm{d} q) \, c \, \hat{\Omega} \, ,  \nonumber \\
    \bar{\mathrm{d}} r &=& - r \, \mathrm{d} \phi + \mathrm{d}( \Delta \, r) - \lambda \, r  
        + \hat{\Omega} \, \gamma \, r - \hat{\Omega} \, \mathrm{d}(c \, r) \, . 
\end{eqnarray*}
Together with (\ref{scs_phi_eq_DMH}), this constitutes a system of equations for $\phi$, $q$, $r$ and $\hat{\Omega}$, for 
which we now have a solution-generating method at hand. 
\end{remark}

\begin{example}
\label{ex:KdV_bidiff_DMH}
Let $\mathcal{A}_0$ be the space of smooth complex functions on $\mathbb{R}^2$. We extend it to $\mathcal{A} = \mathcal{A}_0[\partial]$, where 
$\partial$ is the partial differentiation operator with respect to the coordinate $x$. On $\mathcal{A}$ we define  
\begin{eqnarray*}
    \mathrm{d} f = [\partial, f] \, \xi_1 + \frac{1}{2} \, [\partial^2,f] \, \xi_2 \, , \qquad
    \bar{\mathrm{d}} f = -\frac{1}{2} \, [\partial^2,f] \, \xi_1
            + \frac{1}{3} \, [\partial_t - \partial^3,f] \, \xi_2  
\end{eqnarray*}
(cf. \cite{DMH08bidiff_DMH}). The maps $\mathrm{d}$ and $\bar{\mathrm{d}}$ extend to linear maps on 
$\boldsymbol{\Omega} = \mathcal{A} \otimes \bigwedge \mathbb{C}^2$. 
Choosing 
\begin{eqnarray*}
      \Delta = \Gamma = -\partial \, ,   \quad 
      \kappa = \frac{1}{2} \, Q^2 \, (\xi_1 + \partial \, \xi_2) \, ,   \quad 
      \lambda = -\frac{1}{2} \, P^2 \, (\xi_1 + \partial \, \xi_2) \, ,
\end{eqnarray*} 
with constant matrices $P,Q$, (\ref{Delta,lambda_eqs_DMH}) and (\ref{Gamma,kappa_eqs_DMH}) are satisfied. Requiring $\omega_x =0$ means $c=0$. 
Furthermore, we are led to set
\begin{eqnarray*}
      \gamma = \gamma_1 \, \xi_1 + (\gamma_2 + \gamma_1 \, \partial) \, \xi_2 \, , \qquad
      \gamma_1 = - \frac{1}{2} (Q^2 \omega - \omega \, P^2) \, , \quad
      \gamma_2 = \frac{1}{3} \, \omega_t \, .
\end{eqnarray*}
\textbf{1.} Let $\omega$ be constant, so that $\gamma_2=0$. Then, in terms of $u = 2 \phi_x$, 
(\ref{scs_phi_eq_DMH}) and (\ref{scs_q,r_eqs_DMH}) lead 
to\footnote{There is a wrong factor in front of the cubic nonlinearities in (4.5) of \cite{CDMH16_DMH}.} 
\begin{eqnarray*}
   4 \, u_t - u_{xxx} - 3 \, (u^2)_x  &=& \left( q_x \, r - q \, r_x \right)_x \, , \label{KdV_scs_u_DMH} \\
   q_t - q_{xxx} - \frac{1}{4} \, q \, r \, q &=& \frac{3}{4} \, ( u_x \, q + 2 \, u \, q_x )  \, , \\
   r_t - r_{xxx} + \frac{1}{4} \, r \, q \, r &=& \frac{3}{4} \, ( r \, u_x + 2 \, r_x \, u )  \, , 
\end{eqnarray*}
after a redefinition of $q$. Here we disregarded those equations resulting from (\ref{scs_q,r_eqs_DMH}) 
that involve $\gamma_1$. 
The above system is the second member of the Yajima-Oikawa hierarchy\index{Yajima-Oikawa hierarchy} 
(also see the appendix of \cite{CDMH16_DMH}).\footnote{We are grateful to Dmitry Demskoi and Maxim Pavlov 
for informing us about this.} 
The first equation is a KdV equation\index{Korteweg-deVries equation} with ``sources''. \\
\textbf{2.} Setting $\gamma_1 =0$, we are led to
\begin{eqnarray*}
   4 \, u_t - u_{xxx} - 3 \, (u^2)_x  
     = 8 \, (q \, \omega_t \, r)_x  \, ,  \quad
  q_{xx} = q \, Q^2 - u \, q  \, , \quad
  r_{xx} = P^2 \, r - r \, u \, .
\end{eqnarray*}
where $\omega_t$ can be absorbed by a redefinition of $q$ or $r$. Here we disregarded equations 
resulting from (\ref{scs_q,r_eqs_DMH}) 
that involve $\gamma_2$. This example of a ``system with self-consistent sources'' appeared in \cite{Mel'88_DMH}. \\
In both cases, Theorem~\ref{thm:scs_DMH} generates exact solutions. Soliton solutions are obtained starting 
with vanishing seed solution. Many further examples can be found in \cite{CDMH16_DMH}.
\end{example}

\section{Conclusions and further remarks}
\label{sec:conclusions_DMH}
Bidifferential calculus\index{bidifferential calculus} is a drastic abstraction of structures that are relevant in the theory of 
(completely) integrable PDDEs\index{PDDE! integrable}. This structure may well find applications far away from the latter. 
This suggests to look for associative algebras\index{algebra! associative} that admit a bidifferential calculus. So far, 
there are only few examples beyond differential geometry\index{geometry! differential} (Fr\"olicher-Nijenhuis 
theory\index{Fr\"olicher-Nijenhuis theory}) and examples with a graded algebra\index{algebra! graded} of the 
form (\ref{Omega_wedge_DMH}), which underlies most of our work on  
integrable PDDEs\index{PDDE! integrable}. For example, in \cite{Sita00_DMH} a bidifferential 
calculus\index{bidifferential calculus} has been found 
for the quantum group\index{quantum group} $U_q(\mathrm{sl}(2))$, with $q$ a third root of unity. There is a vast literature 
by now on differential calculi on some classes of associative algebras\index{algebra! associative}. For any given 
$(\boldsymbol{\Omega},\mathrm{d})$, the question is then whether there exists another graded 
derivation\index{derivation! graded} $\bar{\mathrm{d}}$ 
on $\boldsymbol{\Omega}$ that extends it to a bidifferential caclulus\index{bidifferential calculus}. For instance, this problem 
could be addressed for (bicovariant) differential calculi on Hopf algebras\index{algebra! Hopf} (quantum groups\index{quantum group}). 
Of particular interest are also deformations of algebras that underly classical integrable systems, 
leading to a kind of ``quantization'' of the latter. What we are looking for, in particular, is a 
noncommutative version of Fr\"olicher-Nijenhuis theory\index{Fr\"olicher-Nijenhuis theory}. 

Even concerning classical integrable systems, bidifferential calculus\index{bidifferential calculus} has not been explored 
systematically, so far. 
Given a bidifferential calculus, equations like (\ref{phi_eq_DMH}) or (\ref{g_eq_DMH}) are not necessarily 
equivalent to a non-trivial PDDE\index{PDDE} (or a system of PDDEs). How to express the dependent variable 
in a suitable way in terms of the non-commuting elements of the algebra $\mathcal{A}$ ? So far this is still 
too much based on trial and error, and more systematics would be desirable. 

We mention the following in order to build a bridge to other contributions to these proceedings, 
dealing with deformed derivations\index{derivation! deformed}. Let the space of 1-forms 
$\boldsymbol{\Omega}^1$ admit a finite left and also right $\mathcal{A}$-module\index{module} basis $\theta^s$, $s=1,\ldots,S$. 
Then there are maps $\Theta^s_{s'} : \mathcal{A} \to \mathcal{A}$ such that
\begin{eqnarray*}
    f \, \theta^s = \sum_{s'=1}^S \theta^{s'} \, \Theta^s_{s'}(f)  \, ,
\end{eqnarray*}
for all $f \in \mathcal{A}$. Introducing generalized (left and right) partial derivatives\index{partial derivative! generalized} 
via
\begin{eqnarray*}
     \mathrm{d} f = \sum_{s=1}^S (\partial^{(L)}_s f) \, \theta^s = \sum_{s=1}^S \theta^s \, (\partial^{(R)}_s f) \, ,
\end{eqnarray*}
the Leibniz rule yields
\begin{eqnarray*}
     \sum_{i=1}^J \theta^s \, \partial^{(R)}_s (f h)  
   &=& \mathrm{d} (f h) 
   = (\mathrm{d} f) \, h + f \, \mathrm{d} h \\
   &=& \sum_{s=1}^S \theta^s \, (\partial^{(R)}_s f) \, h 
     + f \, \sum_{s'=1}^S \theta^{s'} \, \partial^{(R)}_{s'} h \\
   &=& \sum_{s=1}^S \theta^s \, [ \partial^{(R)}_s f) \, h + \sum_{s'=1}^S \Theta^{s'}_s(f) \, \partial^{(R)}_{s'} h ]   
\end{eqnarray*}
Hence the right partial derivatives satisfy the twisted Leibniz rule 
\begin{eqnarray*}
    \partial^{(R)}_s (f h) = (\partial^{(R)}_s f) \, h + \sum_{s'=1}^S \Theta^{s'}_s(f) \, \partial^{(R)}_{s'} h \, .
\end{eqnarray*}
If $\Theta$ is diagonal, then the maps $\Theta^s_s$ are automorphisms\index{automorphism} and $\partial^{(R)}_s$ is a 
generalized derivation\index{derivation! generalized}
\cite{March88_DMH,Bout+Marc94_DMH} (called $\sigma$-derivation in \cite{HLS06_DMH}, for example). Also see 
\cite{DMH04auto_DMH,DMH04CJP_DMH} for corresponding examples. 
There is a similar formula for the left derivatives, of course, and corresponding generalized 
partial derivatives\index{partial derivative! generalized} can also be associated with $\bar{\mathrm{d}}$. 
If the aforementioned conditions are met, equations considered in this work can be expressed in terms of these 
generalized partial derivatives\index{partial derivative! generalized}.

\providecommand{\bysame}{\leavevmode\hbox to3em{\hrulefill}\thinspace}

\end{document}